\newsavebox\myboxA
\newsavebox\myboxB
\newlength\mylenA
\newcommand*\xoverline[2][0.70]{%
    \sbox{\myboxA}{$\m@th#2$}%
    \setbox\myboxB\null% Phantom box
    \ht\myboxB=\ht\myboxA%
    \dp\myboxB=\dp\myboxA%
    \wd\myboxB=#1\wd\myboxA% Scale phantom
    \sbox\myboxB{$\m@th\overline{\copy\myboxB}$}%  Overlined phantom
    \setlength\mylenA{\the\wd\myboxA}%   calc width diff
    \addtolength\mylenA{-\the\wd\myboxB}%
    \ifdim\wd\myboxB<\wd\myboxA%
       \rlap{\hskip 0.5\mylenA\usebox\myboxB}{\usebox\myboxA}%
    \else
        \hskip -0.5\mylenA\rlap{\usebox\myboxA}{\hskip 0.5\mylenA\usebox\myboxB}%
    \fi}
\newcommand\blfootnote[1]{%
	\begingroup
	\renewcommand\thefootnote{}\footnote{#1}%
	\addtocounter{footnote}{-1}%
	\endgroup}
\renewcommand\AB@affilsepx{ \protect\Affilfont} % typesets affiliations next to each other
\def\BState{\State\hskip-\ALG@thistlm}
\theoremstyle{remark}
\newcommand{\argminD}{\arg\!\min}
\newtheorem{theorem}{\textbf{Theorem}}
\renewenvironment{proof}{{\textbf{Proof.}}}{}
\begin{document}

	\title{Bit-Interleaved Coded Multiple Beamforming with Perfect Coding in  Massive MIMO Systems }

	\small
	\author{Sadjad Sedighi, \textit{Student Member, IEEE,}}
	\author{Ender Ayanoglu, \textit{Fellow, IEEE}}
	\affil{CPCC, Dept of EECS, UC Irvine, Irvine, CA, USA,}
	\normalsize

	% make the title area
	\maketitle
	\begin{abstract}
	This letter investigates bit-interleaved coded multiple beamforming (BICMB) with perfect coding in millimeter-wave (mm-wave) massive multiple-input multiple-output (MIMO) systems to achieve both maximum diversity gain and multiplexing gain. Using perfect coding with BICMB enables us to do this. We show that by using BICMB and perfect coding, the diversity gain becomes independent from the number of transmitted data streams and the number of antennas in each remote antenna unit (RAU) at the transmitter and the receiver. The assumption is that the perfect channel state information (CSI) is known at both the transmitter and the receiver and the number of antennas goes to infinity. This latter assumption can be relaxed by a large number of antennas in each RAU, similar to the case for all massive MIMO research. Simulation results show that when the perfect channel state information assumption is satisfied, the use of BICMB with perfect coding results in the diversity gain values predicted by the analysis.\iffalse  {\color{red} When channel coding is used to achieve the diversity gain and full multiplexing gain, the condition $R_cS\leq 1$ must be satisfied. Here, $R_c$ is the code rate and $S$ is the number of employed subchannels. This condition can be removed, as long as a constellation precoder designed and added to the system.}	\fi
	\end{abstract}
	\blfootnote{This work was partially supported by NSF under Grant No. 1547155.}
%	\pagebreak
	\section{Introduction}
	Diversity gain analysis of a millimeter-wave (mm-wave) massive multiple-input multiple-output (MIMO) system  with distributed antenna-subarray architecture was first studied in \cite{Dian2018J}. The diversity gain calculated in \cite{Dian2018J} depends on the number of transmitted data streams in the system. This means by increasing the number of transmitted data streams, the diversity gain decreases. Furthermore, the diversity gain in \cite{Dian2018J} can be increased simply by increasing the number of antenna subarrays. The system model used in \cite{Xiao2015} is co-located and the maximum achievable diversity gain is the number of multipath components (MPC), i.e., $L$. This maximum value for the diversity gain can be achieved by only transmitting one data stream through the system.
	Diversity gain analysis for the mm-wave MIMO systems is studied in \cite{Ruan2019,Xiao2013}. In \cite{Elganimi2018}, a combination of Space-Time Block Coded Spatial Modulation with Hybrid Analog-Digital Beamforming is used to achieve the full diversity gain for mm-wave MIMO systems. 
	
	Bit-interleaved coded modulation (BICM) was first introduced to increase the code diversity \cite{Caire1998,Zehavi1992}. Later on, bit-interleaved coded multiple beamforming (BICMB) was used to achieve full diversity gain and full multiplexing gain in MIMO systems \cite{Akay2007,Park2009,Sengul2009}. In this method, different codewords are interleaved among different subchannels with different diversity orders. In \cite{Xiao2015}, iterative eigenvalue decomposition (EVD) is used to find the antenna weight vectors such that the full diversity gain is achieved for mm-wave massive MIMO systems. To overcome this diversity degradation, in \cite{Sedighi2019}, we proved that by using BICMB in a single-user mm-wave massive MIMO system with distributed antenna-subarray architecture both full diversity gain and full multiplexing gain can be achieved.  Furthermore, unlike \cite{Xiao2015}, the diversity gain can be increased by increasing the number of remote antenna units (RAUs) at both the transmitter and the receiver side.
	 
	 Perfect space-time block codes (PSTBC) was studied in \cite{Oggier2006,Elia2005,Berhuy2009} to achieve full rate and full diversity in any dimension. However, dimensions 2, 3, 4 and 6 are the only dimensions yield to increasing the coding gain. In \cite{Li2010}, perfect coding with multiple beamforming is used to achieve full diversity and full multiplexing in a MIMO system with less decoding complexity than a system employing the PSTBC and full precoded multiple beamforming (FPMB). In \cite{Li2012}, channel coding is added to the perfect coding and diversity gain analysis is carried out to prove that BICMB with perfect coding achieves the full diversity order.

In this work, we use BICMB with perfect coding to achieve full diversity gain and full multiplexing gain in mm-wave massive MIMO systems. The diversity analysis is carried out. We show that by using perfect coding in addition to convolutional coding, the diversity gain becomes independent from the number of transmitted data streams.

	The remainder of this paper is organized as follows: In Section II, the system model and the description of BICMB with perfect coding is given. In Section III, the diversity gain analysis is provided. In Section IV, the decoding technique is studied. Section V provides the simulation results. Finally, conclusions are presented in Section VI.
	
	\textit{Notation:}
Boldface upper and lower case letters denote matrices and column vectors, respectively.	The  Hamming distance
	between any two codewords in a convolutional code is defined as the free distance $d_{H}$. The symbols $(.)^H, (.)^*,(\bar{.})$ and $\forall$ denote the Hermitian, complex conjugate, binary complement, and for all, respectively. $\mathcal{CN} (0, 1)$ denotes a circularly symmetric complex Gaussian random distribution with zero mean and unit variance. The expectation operator is denoted by $\mathbb{E}\left[.\right]$. Finally, diag$\{a_1 , a_2 , \dots , a_N \}$ stands for a diagonal matrix with diagonal elements $\{a_1 , a_2 ,\dots , a_N \}$.

	\section{System Model}
	
	One can approximate the average probability of bit error (BER) $P_E$ at high SNR for both coded and uncoded system as \cite{Proakis2007,Wang2003}
    \begin{align}
        P_E \approx (G_c \bar{\gamma})^{-G_d},
    \end{align}
    where $G_c$ and $G_d$ are defined as coding gain and diversity gain, respectively. Average SNR is shown by $\bar{\gamma}$. In a log-log scale, diversity gain $G_d$ determines the slope of the BER versus the average SNR curve in high SNR regime. Furthermore, changing $G_c$ leads to shift of the curve in SNR relative to a benchmark BER curve of $(\bar{\gamma}^{-G_d})$. 
    
	We consider a single-user mm-wave massive MIMO scenario shown in Fig. \ref{su_fc}, where the transmitter is equipped with $M_t$ RAUs, $N_t$ antennas at each RAU, and $N_t^{\text{RF}}$ RF chains. The receiver has $M_r$ RAUs, $N_r$ antennas at each RAU, and $N_r^{\text{RF}}$ RF chains. The transmitter sends $N_s=D$ data streams to the receiver. These data streams are generated as follow. First the bit codeword $\mathbf{c}$ is generated through a convolutional encoder with code rate $R_c$. Then a random bit-interleaver is used to generate an interleaved sequence. The output of the interleaver is modulated by M-quadrature amplitude modulation (QAM).	We define a one-to-one mapping from $\mathbf{X}_k = \left[\mathbf{x}_{1,k},\dots,\mathbf{x}_{D,k}\right]$ to $\mathbf{Z}_k$ as $\mathbf{Z}_k=\mathbb{M}\left\{\mathbf{X}_k\right\}$ where $\mathbb{M}$ denotes the PSTBC codeword generating function. A PSTBC codeword, i.e., $\mathbf{Z}_k$ is generated by using $D^2$ consecutive complex-valued scalar symbols \cite{Li2012}
	\begin{align}\label{stream}
 \mathbf{Z}_k=\mathbb{M}\left\{\mathbf{X}_k\right\}= \sum_{v=1}^{D} \text{diag}(\mathbf{G}\mathbf{x}_{v,k})\mathbf{E}^{v-1},
	\end{align}
	where $\mathbf{G}$ is an $D \times D$ unitary matrix \cite{Oggier2006}, $\mathbf{x}_{v,k}$ is an $D\times 1$ vector whose elements are the $v$th $D$ input modulated scalar symbols and $D \in \{2,3,4,6 \}$. Matrix $\mathbf{E}$ is defined as
	\begin{align}
	    \mathbf{E} = \begin{bmatrix}
    0 & 1 & 0 & \cdots & 0 & 0\\
    0 & 0 & 1 & \cdots & 0 & 0 \\
    \vdots & \vdots & \vdots &\ddots & \ddots & \vdots \\
    0 & 0 & 0 &\cdots & 0 & 1 \\
    g & 0 & 0 &\cdots & 0 & 0
\end{bmatrix},
	\end{align}
	where $g=\left\{
    \begin{array}{ll}
        i, & D=2,4, \\
        e^{\frac{2\pi}{3}}, & D=3,\\
        -e^{\frac{2\pi}{3}}, & D=6.
    \end{array}
\right.$ 

As it can be seen from Fig. \ref{su_fc}, the complex-valued matrix $\mathbf{F}_{\text{BB}} \in \mathbb{C}^{N_t^{RF} \times N_s}$ is used for preprocessing at the baseband. A set of $M_tN_t$ phase shifters is applied to the output of each RF chain. As a result of this process, different beams are formed in order to transmit the RF signals. We can model this process with  complex-valued matrix $\mathbf{F}_{\text{RF}} \in \mathbb{C}^{M_t N_t \times N_t^{RF} }$. Note that in this work $M_t=M_r=N_s=D$.
		\begin{figure}[!t]
		\centering
        \includegraphics[width=.5\textwidth]{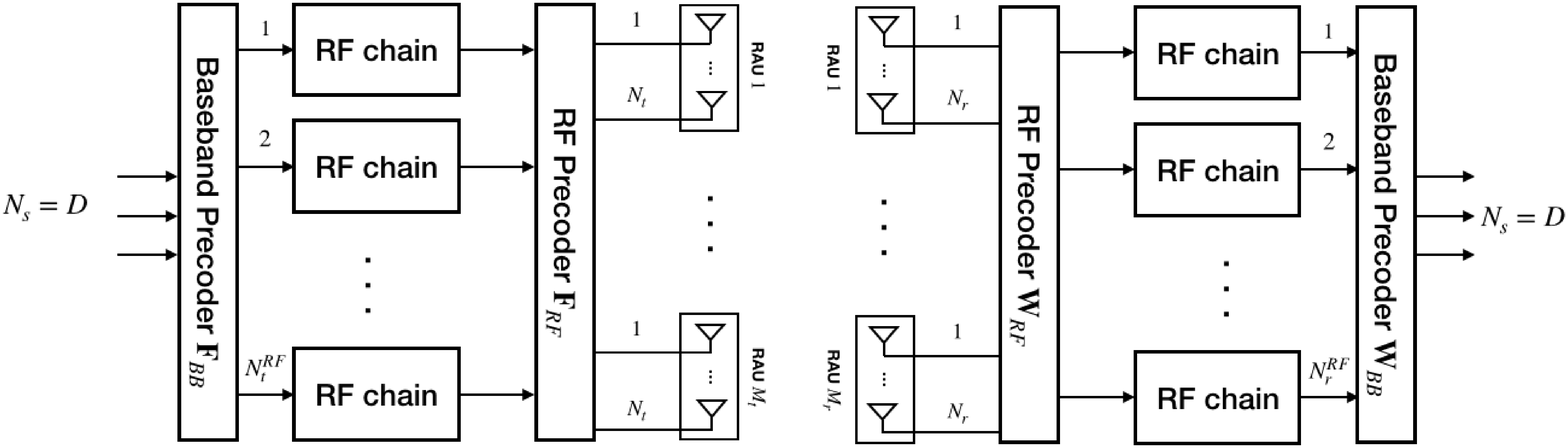}
		\caption{Block diagram of a mm-wave massive MIMO system with distributed antenna sub-arrays.}
		\label{su_fc}
	\end{figure}
	
	By assuming a narrowband flat fading channel model, we write the $M_rN_r\times M_tN_t$ channel matrix $\mathbf{H}$ as  
	\begin{align} \label{channel_su}
	\mathbf{H}=
	\begin{bmatrix}
	\sqrt{\beta_{11}}\mathbf{H}_{11}&\sqrt{\beta_{12}}\mathbf{H}_{12} & \dots& \sqrt{\beta_{1M_t}}\mathbf{H}_{1M_t} \\
	\sqrt{\beta_{21}}\mathbf{H}_{21}&\sqrt{\beta_{22}}\mathbf{H}_{22} & \dots& \sqrt{\beta_{2M_t}}\mathbf{H}_{2M_t} \\
	\vdots & \vdots & \ddots & \vdots  \\
	\sqrt{\beta_{M_r1}}\mathbf{H}_{M_r1} & \sqrt{\beta_{M_r2}}\mathbf{H}_{M_r2} & \dots & \sqrt{\beta_{M_rM_t}}\mathbf{H}_{M_rM_t}
	\end{bmatrix},
	\end{align}
	where each $\mathbf{H}_{ij}$ is the MIMO channel between the $i$th RAU at the receiver and the $j$th RAU at the transmitter. Also, $\beta_{ij}$ is a real-valued nonnegative number and represents the large-scale fading effect between the $i$th RAU at the receiver and $j$th RAU at the transmitter.
	Note that in this work, we use Saleh-Valenzuela model for each subchannel $\mathbf{H}_{ij}$ \cite{HXu2002,Standard,Sedighi2019}. For the sake of simplicity, each scattering cluster is assumed to contribute a single propagation path. The subchannel matrix $\mathbf{H}_{ij}$ is given by
	\begin{align}\label{Hij}
	\mathbf{H}_{ij}=\sqrt{\frac{N_tN_r}{L_{ij}}} \sum_{l=1}^{L_{ij}}\alpha_{ij}^l\mathbf{a}_r(\theta_{ij}^{l}) \mathbf{a}_t^H(\phi_{ij}^{l}),
	\end{align}
	where $L_{ij}$ is the number of propagation paths and $\alpha_{ij}^{l}$ is the complex-gain of the $l$th ray which follows $\mathcal{CN}(0,1)$, $\theta_{ij}^l \in [0,2\pi]$,  $\phi_{ij}^l \in [0,2\pi]$, $\forall i,j,l$, and the vectors $\mathbf{a}_r(\theta_{ij}^{l})$ and $ \mathbf{a}_t(\phi_{ij}^{l})$ are the normalized array response at the receiver and transmitter, respectively. In particular, this paper adopts a  uniform linear array (ULA) where $\mathbf{a}_r(\theta_{ij}^{l})$ and $ \mathbf{a}_t(\phi_{ij}^{l})$ are modeled as
	
	\begin{align}
	\mathbf{a}_r(\theta_{ij}^{l})=\frac{1}{\sqrt{N_r}}\left[1, e^{j\frac{2\pi}{\lambda}d\sin(\theta_{ij}^{l})},\dots, e^{j(N_r-1)\frac{2\pi}{\lambda}d\sin(\theta_{ij}^{l})}\right]^T,
	\end{align}
	\begin{align}
	\mathbf{a}_t(\phi_{ij}^{l})=\frac{1}{\sqrt{N_t}}\left[1, e^{j\frac{2\pi}{\lambda}d\sin(\phi_{ij}^{l})},\dots, e^{j(N_t-1)\frac{2\pi}{\lambda}d\sin(\phi_{ij}^{l})}\right]^T,
	\end{align}
	where $\lambda$ is the transmission wavelength, and $d$ is the antenna spacing.
	
    The processed signal at the $k$th PSTBC codeword is
	\begin{align}\label{rec_sig}
	\mathbf{Y}_k=\mathbf{W}_{\text{BB}}^H\mathbf{W}_{\text{RF}}^H\mathbf{H}\mathbf{F}_{\text{RF}}\mathbf{F}_{\text{BB}}\mathbf{Z}_k+\mathbf{W}_{\text{BB}}^H\mathbf{W}_{\text{RF}}^H\mathbf{n}_k,
	\end{align}
	where $\mathbf{Y}_k$ is an $D\times D$ complex-valued matrix, $\mathbf{n}_k$ is an $M_rN_r\times1$ vector consisting of i.i.d. $\mathcal{CN}(0,N_0)$ noise samples, where $N_0=\frac{N_t}{SNR}$ and $SNR$ is defined as the signal-to-noise ratio (SNR), $\mathbf{W}_{\text{RF}}$ is the $M_rN_r\times N_r^{RF}$ RF combining matrix, and $\mathbf{W}_{\text{BB}}$ is the $N_r^{(\text{RF})}\times N_s$ baseband combining matrix.
	
	A solution based on singular value decomposition (SVD) of the channel matrix $\mathbf{H}=\mathbf{U}\mathbf{\Lambda}\mathbf{V}^H$ can be derived for the beamforming matrices \cite{Ayach2012,Sedighi2019}. By utilizing the optimal precoder and combiner, one can write (\ref{rec_sig}) as
	\begin{align}\label{proc_sig}
	\mathbf{Y}_k=\mathbf{\Lambda}\mathbf{Z}_k+\tilde{\mathbf{n}}_k,
	\end{align}
	where $\tilde{\mathbf{n}}_k=\mathbf{U}_{(1:D)}^H\mathbf{n}_k,$ and $\mathbf{U}_{(1:D)}$ is the first $D$ columns of the unitary matrix $\mathbf{U}$.

 We model PSTBC codeword sequence as $k' \rightarrow (k,(m,n),j)$, where $k'$ represents the original ordering of the coded bits $c_{k'}$, $(k,(m,n),j)$ are the index of the PSTBC codewords, the symbol position in $\mathbf{X}_k$, and the bit position on the label of the scalar symbol $x_{(m,n),k}$, respectively. 
		We define $\chi_b^j$ as the subset of all signals $x\in\chi$. Note that the label has the value $b\in\{0,1\}$ in position $j$. 
		
		The maximum likelihood (ML) bit metrics for (\ref{rec_sig}) can be written as
		\begin{align}\label{ML_metric}
		    \gamma^{(m,n),j}(\mathbf{Y}_k,c_{k'}) = \min_{\mathbf{X} \in \eta_{c_{k'}}^{(m,n),j} }|| \mathbf{Y}_k-\Lambda\mathbb{M}\{\mathbf{X}\}||^2,
		\end{align}
where $\eta_{c_{k'}}^{(m,n),j} $ is defined as
\begin{align}
\eta_{c_{k'}}^{(m,n),j} = \{ \mathbf{X}: x_{(u,v)=(m,n)}\in \chi_b^j, \text{ and } x_{(u,v)\neq(m,n)} \in \chi \}.
\end{align}

The ML decoder at the receiver makes the decisions according to the rule
\begin{align}
    \hat{\mathbf{c}}=\argminD_{\mathbf{c}}\sum_{k'}\gamma^{(m,n),j}(\mathbf{Y}_k,c_{k'}).
\end{align}
	\section{Diversity Gain Analysis}
    
 In this section, diversity gain is examined for mm-wave massive MIMO systems with distributed antenna-subarray architecture employing BICMB with perfect coding. We show that the diversity gain becomes independent from the number of transmitted streams, comparing to \cite{Dian2018J}.
 \begin{theorem}
			Suppose that $N_r\rightarrow \infty$ and $N_t \rightarrow \infty$. Then by utilizing BICMB with perfect coding, mm-wave massive MIMO systems with distributed antenna-subarray architecture can achieve a diversity gain of
		\begin{align}\label{su_dg}
		G_d=\frac{\left(\sum_{i,j}\beta_{ij}\right)^2}{\sum_{i,j}\beta_{ij}^2L_{ij}^{-1}}
		\end{align}
			$\text{ for } i=1,\dots,M_r \text{ and } j=1,\dots,M_t$.
	\end{theorem}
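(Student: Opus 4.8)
The plan is to start from a union bound on the BER, single out the dominant pairwise‑error‑probability (PEP) term at the free Hamming distance $d_H$, and drive two ingredients through it. The first ingredient is the massive‑MIMO reduction of the channel: as $N_t,N_r\to\infty$ the ULA steering vectors $\mathbf{a}_t(\cdot)$, $\mathbf{a}_r(\cdot)$ attached to the distinct AoD/AoA of the various RAU pairs and rays become mutually orthonormal, so --- as established for this architecture in \cite{Sedighi2019} --- the SVD‑based hybrid beamformer turns (\ref{rec_sig}) into $\mathbf{Y}_k=\boldsymbol{\Lambda}\mathbf{Z}_k+\tilde{\mathbf{n}}_k$ with $\boldsymbol{\Lambda}=\mathrm{diag}(\lambda_1,\dots,\lambda_D)$ the singular values of an effective $D\times D$ channel $\mathbf{H}_{\mathrm{eff}}$ that collects, after path combining, the full energy $\|\mathbf{H}_{\mathrm{eff}}\|_F^2=\sum_{i,j}\beta_{ij}\tfrac{N_tN_r}{L_{ij}}\sum_{l=1}^{L_{ij}}|\alpha_{ij}^{l}|^2$. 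The second ingredient is what perfect coding buys us, namely that the relevant metric distance is controlled by $\|\mathbf{H}_{\mathrm{eff}}\|_F^2$ and not by a single, stream‑count‑dependent eigenmode.

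To make the second ingredient precise I would Chernoff‑bound the conditional PEP of each affected PSTBC codeword, $P(\mathbf{c}\to\hat{\mathbf{c}}\mid\mathbf{H})\le\tfrac12\exp\!\bigl(-\|\boldsymbol{\Lambda}(\mathbf{Z}_k-\hat{\mathbf{Z}}_k)\|_F^2/(4N_0)\bigr)$, and expand the right side with (\ref{stream}). Since each column of $\mathbf{E}^{v-1}$ has a single unit‑modulus entry, row $s$ of $\mathbf{Z}_k-\hat{\mathbf{Z}}_k$ carries the scalars $[\mathbf{G}\mathbf{d}_v]_s$ with $\mathbf{d}_v=\mathbf{x}_{v,k}-\hat{\mathbf{x}}_{v,k}$, so $\|\boldsymbol{\Lambda}(\mathbf{Z}_k-\hat{\mathbf{Z}}_k)\|_F^2=\sum_{s=1}^{D}\lambda_s^2\sum_v|[\mathbf{G}\mathbf{d}_v]_s|^2$. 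The non‑vanishing‑determinant property of the perfect‑code rotation $\mathbf{G}$ \cite{Oggier2006} guarantees that \emph{no} component $[\mathbf{G}\mathbf{d}_v]_s$ can vanish for a nonzero Gaussian‑integer $\mathbf{d}_v$, and finiteness of the QAM alphabet then furnishes a constant $\delta_{\min}^2>0$ with $|[\mathbf{G}\mathbf{d}_v]_s|^2\ge\delta_{\min}^2$ for every $s$; hence $\|\boldsymbol{\Lambda}(\mathbf{Z}_k-\hat{\mathbf{Z}}_k)\|_F^2\ge\delta_{\min}^2\sum_{s=1}^{D}\lambda_s^2=\delta_{\min}^2\|\mathbf{H}_{\mathrm{eff}}\|_F^2$, with a matching upper bound up to a constant. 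This is precisely where the $D=N_s$ dependence disappears: under plain BICMB the metric of a bit routed through the weakest eigenmode is only bounded by $\lambda_D^2$, whereas under perfect coding every bit is dispersed over all $D$ eigenmodes with guaranteed nonzero weight, so the metric is governed by the pooled channel energy. Collecting the (at least one) affected codewords and using $N_0=N_t/\mathrm{SNR}$ gives $P(\mathbf{c}\to\hat{\mathbf{c}}\mid\mathbf{H})\doteq\exp\!\bigl(-\Theta(N_r\,\mathrm{SNR}\,T)\bigr)$ with $T:=\|\mathbf{H}_{\mathrm{eff}}\|_F^2/(N_tN_r)=\sum_{i,j}\tfrac{\beta_{ij}}{L_{ij}}U_{ij}$ and $U_{ij}:=\sum_{l=1}^{L_{ij}}|\alpha_{ij}^{l}|^2$ independent Gamma variables of shape $L_{ij}$ and unit scale.

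Averaging over the fading closes the argument: the averaged PEP equals $\mathbb{E}\bigl[\exp(-\Theta(N_r\,\mathrm{SNR}\,T))\bigr]$, whose SNR exponent is the small‑$\epsilon$ exponent of $\Pr[T\le\epsilon]$. Following the methodology of \cite{Dian2018J} I would approximate the weighted Gamma sum $T$ by a single Gamma law, matching its first two moments $\mathbb{E}[T]=\sum_{i,j}\beta_{ij}$ and $\mathrm{Var}[T]=\sum_{i,j}\beta_{ij}^2 L_{ij}^{-1}$; this fixes the shape to $\alpha=(\sum_{i,j}\beta_{ij})^2/(\sum_{i,j}\beta_{ij}^2 L_{ij}^{-1})$ and the scale to $\theta=(\sum_{i,j}\beta_{ij}^2 L_{ij}^{-1})/(\sum_{i,j}\beta_{ij})$. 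Then $\mathbb{E}[\exp(-cN_r\mathrm{SNR}\,T)]\approx(1+cN_r\theta\,\mathrm{SNR})^{-\alpha}\doteq\mathrm{SNR}^{-\alpha}$, and since this exponent depends on neither $d_H$ nor $D$, the remaining union‑bound terms decay at least as fast; comparing with $P_E\approx(G_c\bar{\gamma})^{-G_d}$ identifies $G_d=\alpha$, i.e.\ (\ref{su_dg}).

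The step I expect to be the real obstacle is the channel reduction of the first paragraph --- rigorously showing that, as $N_t,N_r\to\infty$, the per‑ray array responses across all $M_rM_t$ RAU pairs become simultaneously orthonormal (this needs almost‑surely distinct angles and is where the argument leans on \cite{Sedighi2019}) and that the hybrid beamformer then realizes the SVD‑optimal $\boldsymbol{\Lambda}$ carrying the full energy $\|\mathbf{H}\|_F^2$. A second, bookkeeping‑level caveat worth stating explicitly is that the Gamma moment‑matching is an approximation: the exact small‑$\epsilon$ exponent of $T$ is $\sum_{i,j}L_{ij}$, attained only when $\beta_{ij}\propto L_{ij}$, so (\ref{su_dg}) should be read as the second‑order‑accurate diversity predicted by the Gamma model of \cite{Dian2018J} --- which the simulations then corroborate.
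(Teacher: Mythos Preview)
Your route is essentially the paper's: Chernoff--bound the dominant PEP, use the perfect--code rotation $\mathbf{G}$ to show the metric distance is lower--bounded by a constant times $\sum_{s=1}^{D}\lambda_s^2$ (the paper phrases this via $\zeta_{\min}=\min_u\sum_{k',d_H}\sum_v|\mathbf{g}_u^T(\mathbf{x}_{v,k}-\hat{\mathbf{x}}_{v,k})|^2>0$, which is exactly your non--vanishing--component observation about $\mathbf{G}$), then moment--match the weighted Gamma sum (the paper calls this the Welch--Satterthwaite approximation) and read the diversity off the MGF.

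There is one genuine gap. You assert that the SVD beamformer delivers an effective $D\times D$ channel with
\[
\sum_{s=1}^{D}\lambda_s^2=\|\mathbf{H}_{\mathrm{eff}}\|_F^2=\sum_{i,j}\beta_{ij}\,\frac{N_tN_r}{L_{ij}}\sum_{l=1}^{L_{ij}}|\alpha_{ij}^{l}|^2,
\]
i.e.\ that the top $D$ eigenmodes collect the \emph{full} channel energy. This is false: even in the large--array limit, the steering--vector orthogonality you invoke gives $\mathrm{rank}(\mathbf{H})=L_t=\sum_{i,j}L_{ij}$, which is generically strictly larger than $D=M_t=M_r$, so $\sum_{s=1}^{D}\lambda_s^2<\|\mathbf{H}\|_F^2$. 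The paper does not need this equality; it instead inserts the elementary ordering inequality
\[
\frac{1}{D}\sum_{s=1}^{D}\lambda_s^2\;\ge\;\frac{1}{L_t}\sum_{s=1}^{L_t}\lambda_s^2\;=\;\frac{1}{L_t}\,\|\mathbf{H}\|_F^2,
\]
and only then identifies $\|\mathbf{H}\|_F^2$ (all $L_t$ singular values, not $D$) with $N_tN_r\sum_{i,j}\tfrac{\beta_{ij}}{L_{ij}}\sum_l|\alpha_{ij}^l|^2$ via the asymptotic per--subchannel SVD. The factor $D/L_t$ is absorbed into $G_c$ and does not touch the exponent. You correctly flag the ``channel reduction'' as the obstacle, but the fix is this one--line inequality, not an argument forcing the top--$D$ modes to carry all the energy. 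Your closing caveat --- that the Welch--Satterthwaite shape $\kappa$ coincides with the exact tail exponent $\sum_{i,j}L_{ij}$ only when $\beta_{ij}\propto L_{ij}$ --- is correct and sharper than anything the paper makes explicit.
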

		\begin{proof}
		\end{proof}
 Assume that codeword $\mathbf{c}$ is transmitted and codeword $\hat{\mathbf{c}}$ is detected. Then one can write the pairwise error probability (PEP) of $\mathbf{c}$ and $\hat{\mathbf{c}}$ as
	\begin{align}{\label{su_PEP1}}
	    P(\mathbf{c}\rightarrow\hat{\mathbf{c}}|\mathbf{H}) = P\left( \sum_{k'}  ||\mathbf{Y}_k-\Lambda \tilde{\mathbf{Z}}\}||^2 
	     \geq \sum_{k'} ||\mathbf{Y}_k-\Lambda \hat{\mathbf{Z}}\} ||^2 \; | \;  \mathbf{H}\right),
	\end{align}
	where $\tilde{\mathbf{Z}} = \mathbb{M}\{\tilde{\mathbf{X}}\}$, $\hat{\mathbf{Z}} = \mathbb{M}\{\hat{\mathbf{X}}\}$, $\tilde{\mathbf{X}} = \argminD_{\mathbf{X} \in \eta_{c_{k'}}^{(m,n),j}} ||\mathbf{Y}_k-\Lambda \mathbb{M}\{\mathbf{X}\}||^2 $, and $\hat{\mathbf{X}} = \argminD_{\mathbf{X} \in \eta_{\bar{c}_{k'}}^{(m,n),j}} ||\mathbf{Y}_k-\Lambda \mathbb{M}\{\mathbf{X}\}||^2 $.
	Since the bit metrics corresponding to the same coded bits between the pairwise errors are the same and $||\mathbf{Y}_k-\mathbf{\Lambda}\mathbf{Z}_k||^2 \geq ||\mathbf{Y}_k-\mathbf{\Lambda}\hat{\mathbf{Z}}_k||$, (\ref{su_PEP1}) is  upper-bounded by
		\begin{align}{\label{su_PEP2}}
	    P(\mathbf{c}\rightarrow\hat{\mathbf{c}}|\mathbf{H})  \leq P\left(\xi \geq \sum_{k',d_H}||\mathbf{\Upsilon}||^2 \right)
	\end{align}
	where ${\sum_{k',d_H}}$ is the summation of the $d_H$ values corresponding to the different coded bits between the bit codewords, $\mathbf{\Upsilon} =  \mathbf{\Lambda}(\mathbf{Z}-\hat{\mathbf{Z}}_k)$, and $\xi =- \sum_{k',d_H}\text{tr}\left(\mathbf{\Upsilon}^H\mathbf{n}_k+\mathbf{n}_k^H\mathbf{\Upsilon}\right)$. Since $\xi\sim \mathcal{CN}\left(0,2N_0\sum_{k',d_H}||\mathbf{\Upsilon}||^2\right)$, (\ref{su_PEP2}) is replaced by the $Q$ function as 
			\begin{align}{\label{su_PEP3}}
	    P(\mathbf{c}\rightarrow\hat{\mathbf{c}}|\mathbf{H})  \leq Q\left(\sqrt{\frac{\sum_{k',d_H}||\mathbf{\Upsilon}||^2}{2N_0} } \right)
	\end{align}
	By using the inequality of the $Q$ function where it is upper-bounded as $Q(x) \leq \frac{1}{2}e^{-\frac{x^2}{2}}$, the average PEP in is upper-bounded as
	\begin{align}\label{su_PEP4}
	    P(\mathbf{c}\rightarrow\hat{\mathbf{c}}) =\mathbb{E}[ P(\mathbf{c}\rightarrow\hat{\mathbf{c}}|\mathbf{H})]\leq \frac{1}{2}\mathit{E}\left[\exp\left(-\frac{\sum_{k',d_H} ||\mathbf{\Upsilon} ||^2}{4N_0}\right)\right].
	\end{align}
	
	By using \cite{Li2010}, we can rewrite (\ref{su_PEP4}) as
	\begin{align}
	  P(\mathbf{c}\rightarrow\hat{\mathbf{c}}) =\frac{1}{2}\mathit{E}\left[\exp\left(-\frac{\sum_{u=1}^D \lambda^2_u \zeta_u }{4N_0}\right)\right],
	\end{align}
	where $\zeta_u = \sum_{k',d_H}\rho_{u,k}$ and $\rho_{u,k}=\sum_{v=1}^D|\mathbf{g}_u^T(\mathbf{x}_{v,k}-\hat{\mathbf{x}}_{v,k})|^2$.
	
	By defining $L_t =\sum_{i,j}L_{ij}$ as the rank of the channel matrix $\mathbf{H}$, i.e., the number of singular values of the channel matrix $\mathbf{H}$, we can write
			\begin{align}\label{su_ineq}
		\frac{\left(\zeta_{\text{min}}\sum_{u=1}^{L_t}\lambda_u^2\right)}{L_t} \leq 
		\frac{\left(\zeta_{\text{min}}\sum_{u=1}^{D}\lambda_u^2\right)}{D} \leq
		\frac{\left(\sum_{u=1}^{D}\lambda_u^2\zeta_u\right)}{D}.
		\end{align}
	
		One can define
		\begin{align} \label{theta}
		\Theta \triangleq \sum_{u=1}^{L_t}\lambda_u^2=||\mathbf{H}||_F^2=\sum_{i=1}^{M_r}\sum_{j=1}^{M_t}\beta_{ij}||\mathbf{H}_{ij}||_F^2.
		\end{align}
		
		Theorem 3 in \cite{Ayach2012} implies that the singular values of $\mathbf{H}_{ij}$ converge to $\sqrt{\frac{N_r N_t} {L_{ij}}}\left|\alpha_{l}^{ij}\right|$ in descending order. By using the singular values of $\mathbf{H}_{ij}$, (\ref{theta}) can be rewritten as
		
		\begin{align}\label{Psi_def}
		\Theta=\sum_{i=1}^{M_r}\sum_{j=1}^{M_t}\beta_{ij}||\mathbf{H}_{ij}||_F^2= N_r N_t\sum_{i=1}^{M_r}\sum_{j=1}^{M_t}\underbrace{\frac{\beta_{ij}}{L_{ij}}\sum_{l=1}^{L_{ij}}\left|\alpha_{ij}^l\right|^2}_{\Psi_{ij}}.
		\end{align}
		
		Note that the random variable $\sum_{l=1}^{L_{ij}}\left|\alpha_{ij}^l \right|$ has a $\chi$-squared distribution with $2L_{ij}$ degrees of freedom, or equivalently a Gamma distribution with shape $L_{ij}$ and scale 2, i.e., $\mathcal{G}(L_{ij},2)$. Then, since $\beta_{ij} L_{ij}^{-1}>0$, $\Psi_{ij} \sim \mathcal{G}(L_{ij},2\beta_{ij}L_{ij}^{-1})$ \cite{Hogg1978}. We use the Welch-Satterthwaite equation to calculate an approximation to the degrees of freedom of $\Theta$ (i.e., shape and scale of the Gamma distribution) which is a linear combination of the independent random variables $\Psi_{ij}$ \cite[p.4.1-1]{Satterth1946},\cite{Massey}
		\begin{align}\label{shape}
		\kappa&=\frac{\left(\sum_{i,j}\theta_{ij}k_{ij}\right)^2}{\sum_{i,j}\theta_{ij}^2k_{ij}}=\frac{\left(\sum_{i,j}\beta_{ij}\right)^2}{\sum_{i,j}\beta_{ij}^2L_{ij}^{-1}},\\\label{scale}
		\theta&=\frac{\sum_{i,j}\theta_{ij}^2k_{ij}}{\sum_{i,j}\theta_{ij}k_{ij}}=\frac{\sum_{i,j}\beta_{ij}^2L_{ij}^{-1}}{\sum_{i,j}\beta_{ij}}.
		\end{align}
		
		%%%%%%%%%%
		%%%%%%%%%%
	By using (\ref{su_ineq}), we can upper-bound the PEP in (\ref{su_PEP4}) by
		\begin{align}\label{PEP_exp2_single_user}
		P(\mathbf{c}\rightarrow\hat{\mathbf{c}})\leq\frac{1}{2} \mathit{E}\left[\text{exp}\left(\frac{-\zeta_{\text{min}}D}{4N_0L_t}\Theta\right)\right],
		\end{align}
		which is the definition of the moment generating function (MGF)\cite{Bulmer1965} for the random variable $\Theta$. By using the definition, (\ref{PEP_exp2_single_user}) can be written as
		\begin{align}
		P(\mathbf{c}\rightarrow\hat{\mathbf{c}})\leq&\frac{1}{2} \left(1+ \theta \frac{\zeta_{\text{min}}DN_t}{4L_t}SNR\right)^{-\kappa}\\
		\approx&\frac{1}{2}\left(\theta\frac{\zeta_{\text{min}}DN_t}{4L_t}SNR\right)^{-\kappa} 
		\end{align}
		for high SNR.  
		
		Hence, BICMB with perfect coding achieves full diversity order of
		\begin{align}\label{su_Gd}
		G_d=\kappa=\frac{\left(\sum_{i,j}\beta_{ij}\right)^2}{\sum_{i,j}\beta_{ij}^2L_{ij}^{-1}}
		\end{align}
	which is independent of the number of spatial streams transmitted.
	
		\newtheorem{remarks}{\textbf{Remark}}
			\begin{remarks}
By assuming that $L_{ij}=L$ and $\beta_{ij}=\beta$ for any $i \in \{1,\dots,M_r\}$ and $j \in \{1,\dots,M_t\}$, it can be seen easily that the mm-wave massive MIMO system with distributed antenna-subarray architecture can achieve a diversity gain
		\begin{align} \label{su_Gd_remark1}
	    G_d=L_t=M_rM_tL.
	    \end{align}
	  One can compare this result with the diversity gain calculated for the single-user scenario in \cite{Dian2018J}. As it can be seen, similar to \cite{Sedighi2019}, the full diversity gain and full multiplexing gain is achieved in this paper.
	\end{remarks}
	\section{Decoding}
By replacing (\ref{stream}) in (\ref{proc_sig}), we can rewrite (\ref{proc_sig}) and show that each element of $\mathbf{\Lambda}\mathbf{Z}_k$ is related to only one of the $\mathbf{x}_{v,k}$ \cite{Li2010,Li2012}. For the case $D=3$, we can write
\begin{align}\label{decoding1}
    \mathbf{Y}_k = \begin{bmatrix}
    \lambda_1\mathbf{g_1}^T\mathbf{x}_{1,k} & \lambda_1\mathbf{g_1}^T\mathbf{x}_{2,k} & \lambda_1\mathbf{g_1}^T\mathbf{x}_{3,k}\\
   g \lambda_2\mathbf{g_2}^T\mathbf{x}_{3,k} & \lambda_2\mathbf{g_2}^T\mathbf{x}_{1,k} & \lambda_2\mathbf{g_2}^T\mathbf{x}_{2,k} \\
    g \lambda_3\mathbf{g_3}^T\mathbf{x}_{3,k} & g\lambda_3\mathbf{g_3}^T\mathbf{x}_{1,k} & \lambda_3\mathbf{g_3}^T\mathbf{x}_{1,k}
\end{bmatrix} + \tilde{\mathbf{n}}_k.
\end{align}

The processed signal in (\ref{decoding1}) can be divided into $D$ parts. Then one can write
\begin{align}\label{rec_signal_tot}
    \mathbf{y}_{k,v} = \mathbf{\Omega}_v\mathbf{\Lambda}\mathbf{G}\mathbf{x}_{v,k}+\tilde{\mathbf{n}}_{k,v}
\end{align}
where $v=1,\dots,D$ and $\mathbf{\Omega}_v = \text{diag}(\omega_{v,1},\dots,\omega_{v,D})$. The elements of the matrix $\mathbf{\Omega}_v$ are defined as 
\begin{align}
\omega_{v,u} = \left\{
    \begin{array}{ll}
        1, \; \; \; 1 \leq u \leq D-v+1  \\
        g, \; \; \;  D-v+2 \leq u \leq  D.
    \end{array}
\right.
\end{align}

One can simplify (\ref{rec_signal_tot}) by using the QR decomposition of $\mathbf{\Lambda}\mathbf{G}=\mathbf{Q}\mathbf{R}$ as done in \cite{Li2012} to simplify the ML bit metrics defined in (\ref{ML_metric}). 
		\section{Simulation Results}
		In the simulations, the industry standard 64-state 1/2-rate (133,171) $d_{\text{free}}=10$ convolutional code is used. For BICMB, we separate the coded bits into different substreams of data and a random interleaver is used to interleave the bits in each substream. We assume that the number of RF chains in the receiver and transmitter are twice the number of data streams \cite{Sohrabi2016} (i.e., $N_t^{\text{RF}}=N_r^{\text{RF}}=2N_s$). Also, each scale fading coefficient $\beta_{ij}$ equals $\beta = -20$ dB for all simulations, except for Fig. \ref{diff_beta}. At RAUs in both transmitter and receiver, ULA array configuration with $d=0.5$ is considered. Information bits are mapped onto 16-QAM symbols in each subchannel.
		
		\begin{figure}[!t]
        	\centering
        	\includegraphics[width=.5\textwidth]{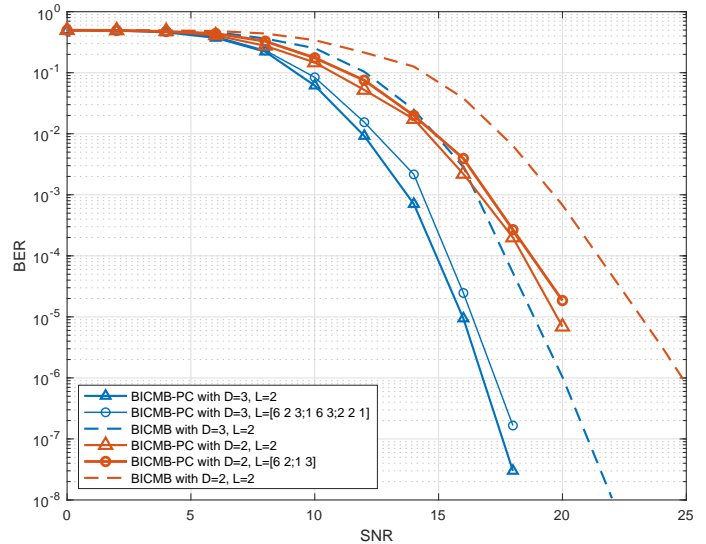}
	         \caption{BER with respect to SNR for setups. $N_t=128$ and $N_r=64$.}
	        \label{ber_pc}
        \end{figure}

		Fig. \ref{ber_pc} illustrates the results for BICMB perfect coding (BICMB-PC) for different values of $D$ and $L_{ij}$ in a mm-wave massive MIMO system. Furthermore, we can see the comparison of the BICMB-PC with the BICMB results  in \cite{Sedighi2019}. We define the number of propagation paths as $\mathbf{L}=[L_{11}\; L_{12};L_{21} \; L_{22}]$ and $\mathbf{L}=[L_{11}\; L_{12}\; L_{12}; L_{21} \; L_{22} \;L_{23};L_{31} \; L_{32} \;L_{33}]$. When 
		$\mathbf{L}=l$, all elements in $\mathbf{L}$ are constant and equal to $l$. It can be seen that the diversity gain remains the same for different values for the number of propagation paths, as long as (\ref{su_Gd}) returns the same value of $G_d$. For example for the red solid line curves with markers, since $\beta_{ij}=\beta$, the diversity gains are $2 \times 2 \times 2 = (2\times 2)^2 / (6^{-1}+2^{-1}+3^{-1}+1^{-1})$ as in (\ref{su_Gd}). Same can be applied for the blue solid line curve where $D=3$. Furthermore, it can be seen that the BICMB curves in \cite{Sedighi2019}, has the same slope in high SNR, i.e., same diversity gain as the BICMB-PC for different setups.
				\begin{figure}[!t]
        	\centering
        	\includegraphics[width=.5\textwidth]{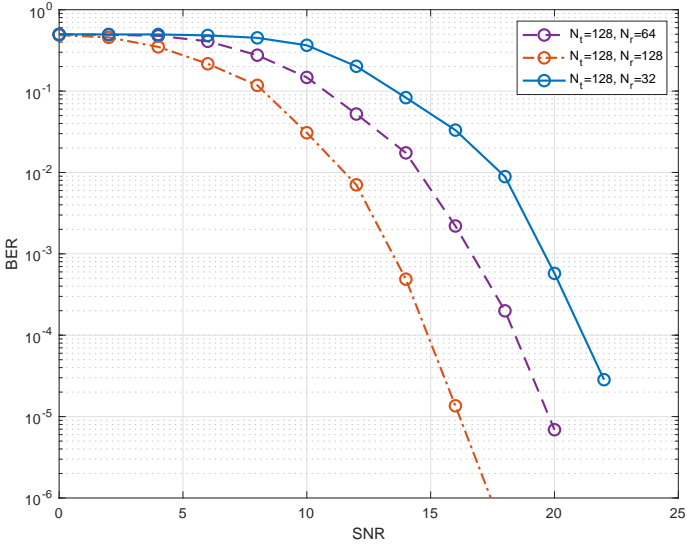}
        	\caption{BER with respect to SNR for different number of antennas at each RAU at the transmitter and receiver. $D=2$ and $L=2$.}
	        \label{dif_anten}
        \end{figure}

		It can be seen from Fig. \ref{dif_anten} that changing the number of antennas at the RAUs does not affect the diversity gain. This confirms (\ref{su_Gd}) where the diversity gain is independent of the number of antennas at each RAU. Furthermore, one can see that by doubling the number of resources here, i.e., the number of antennas at the transmitter or the receiver, the performance of the system gets better by a factor of 3 dB.

				\begin{figure}[!t]
        	\centering
        	\includegraphics[width=.5\textwidth]{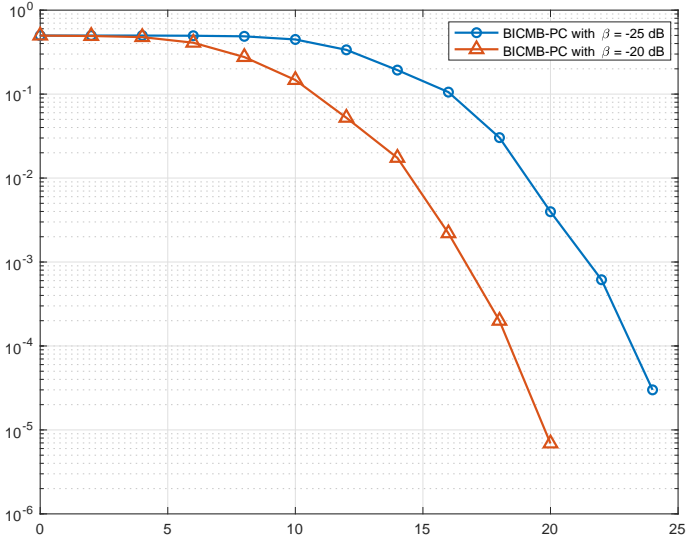}
        	\caption{BER with respect to SNR for different values of large scale fading coefficients, i.e., $\beta$. $D=2$ and $L=2$.}
	        \label{diff_beta}
        \end{figure}
        
        Fig. \ref{diff_beta} shows that by having different values for the large scale fading coefficient, i.e., $\beta$, the diversity does not change, as long as this coefficient remains the same for all subchannels.

		\section{Conclusion}
	In this work we showed that by utilizing both the BICMB and perfect coding in a mm-wave massive MIMO system with distributed antenna-subarray architecture, one can achieve both full diversity gain and full spatial multiplexing gain. This means, the diversity gain is independent from the number of transmitted data streams and can be increased by increasing the number of RAUs at both the transmitter and the receiver. We also show that the diversity gain is independent from the number of antennas at the RAUs in both the transmitter and the receiver. 
	\bibliographystyle{IEEEtran}
	\bibliography{perfect}

% Generated by IEEEtran.bst, version: 1.14 (2015/08/26)
\begin{thebibliography}{10}
\providecommand{\url}[1]{#1}
\csname url@samestyle\endcsname
\providecommand{\newblock}{\relax}
\providecommand{\bibinfo}[2]{#2}
\providecommand{\BIBentrySTDinterwordspacing}{\spaceskip=0pt\relax}
\providecommand{\BIBentryALTinterwordstretchfactor}{4}
\providecommand{\BIBentryALTinterwordspacing}{\spaceskip=\fontdimen2\font plus
\BIBentryALTinterwordstretchfactor\fontdimen3\font minus
  \fontdimen4\font\relax}
\providecommand{\BIBforeignlanguage}[2]{{%
\expandafter\ifx\csname l@#1\endcsname\relax
\typeout{** WARNING: IEEEtran.bst: No hyphenation pattern has been}%
\typeout{** loaded for the language `#1'. Using the pattern for}%
\typeout{** the default language instead.}%
\else
\language=\csname l@#1\endcsname
\fi
#2}}
\providecommand{\BIBdecl}{\relax}
\BIBdecl

\bibitem{Dian2018J}
D.~Yu, S.~Xu, and H.~H. Nguyen, ``Diversity gain of millimeter-wave massive
  {MIMO} systems with distributed antenna arrays,'' \emph{EURASIP Journal on
  Wireless Communications and Networking}, 2019.

\bibitem{Xiao2015}
Z.~{Xiao}, X.~{Xia}, D.~{Jin}, and N.~{Ge}, ``Iterative eigenvalue
  decomposition and multipath-grouping {Tx/Rx} joint beamformings for
  millimeter-wave communications,'' \emph{IEEE Transactions on Wireless
  Communications}, vol.~14, no.~3, pp. 1595--1607, March 2015.

\bibitem{Ruan2019}
S.~{Ruan}, B.~{Hu}, K.~J. {Kim}, Q.~{Li}, L.~{Yuan}, L.~{Jin}, and J.~{Zhang},
  ``Diversity analysis for spatial scattering modulation in millimeter wave
  mimo system,'' in \emph{2019 11th International Conference on Wireless
  Communications and Signal Processing (WCSP)}, 2019, pp. 1--5.

\bibitem{Xiao2013}
Z.~{Xiao}, ``Suboptimal spatial diversity scheme for 60 ghz millimeter-wave
  wlan,'' \emph{IEEE Communications Letters}, vol.~17, no.~9, pp. 1790--1793,
  Sep. 2013.

\bibitem{Elganimi2018}
T.~Y. {Elganimi} and A.~A. {Elghariani}, ``Space-time block coded spatial
  modulation aided mmwave mimo with hybrid precoding,'' in \emph{2018 26th
  Signal Processing and Communications Applications Conference (SIU)}, 2018,
  pp. 1--6.

\bibitem{Caire1998}
G.~Caire, G.~Taricco, and E.~Biglieri, ``Bit-interleaved coded modulation,''
  \emph{IEEE Trans. Inf. Theory}, vol.~44, no.~3, pp. 927--946, May1998.

\bibitem{Zehavi1992}
E.~Zehavi, ``8-{PSK} trellis codes for a {Rayleigh} channel,'' \emph{IEEE
  Trans. Commun.}, vol.~40, no.~5, pp. 873--884, May 1992.

\bibitem{Akay2007}
E.~Akay, E.~Sengul, and E.~Ayanoglu, ``Bit interleaved coded multiple
  beamforming,'' \emph{IEEE Trans. Commun.}, vol.~55, no.~9, pp. 1802--1811,
  Sep. 2007.

\bibitem{Park2009}
H.~J. {Park} and E.~{Ayanoglu}, ``Diversity analysis of bit-interleaved coded
  multiple beamforming,'' in \emph{2009 IEEE International Conference on
  Communications}, June 2009, pp. 1--9.

\bibitem{Sengul2009}
E.~{Sengul}, H.~J. {Park}, and E.~{Ayanoglu}, ``Bit-interleaved coded multiple
  beamforming with imperfect {CSIT},'' \emph{IEEE Transactions on
  Communications}, vol.~57, no.~5, pp. 1505--1513, May 2009.

\bibitem{Sedighi2019}
S.~{Sedighi} and E.~{Ayanoglu}, ``Bit-interleaved coded multiple beamforming in
  millimeter-wave massive {MIMO} systems,'' in \emph{Proc. IEEE International
  Conf. Commun. (ICC)}, May 2019, pp. 1--6.

\bibitem{Oggier2006}
F.~{Oggier}, G.~{Rekaya}, J.~. {Belfiore}, and E.~{Viterbo}, ``Perfect
  space–time block codes,'' \emph{IEEE Transactions on Information Theory},
  vol.~52, no.~9, pp. 3885--3902, Sep. 2006.

\bibitem{Elia2005}
P.~{Elia}, B.~A. {Sethuraman}, and P.~V. {Kumar}, ``Perfect space-time codes
  with minimum and non-minimum delay for any number of antennas,'' in
  \emph{2005 International Conference on Wireless Networks, Communications and
  Mobile Computing}, vol.~1, June 2005, pp. 722--727 vol.1.

\bibitem{Berhuy2009}
G.~{Berhuy} and F.~{Oggier}, ``On the existence of perfect space–time
  codes,'' \emph{IEEE Transactions on Information Theory}, vol.~55, no.~5, pp.
  2078--2082, May 2009.

\bibitem{Li2010}
B.~{Li} and E.~{Ayanoglu}, ``Golden coded multiple beamforming,'' in \emph{2010
  IEEE Global Telecommunications Conference GLOBECOM 2010}, Dec 2010, pp. 1--5.

\bibitem{Li2012}
------, ``Bit-interleaved coded multiple beamforming with perfect coding,'' in
  \emph{2012 IEEE International Conference on Communications (ICC)}, June 2012,
  pp. 4246--4251.

\bibitem{Proakis2007}
J.~G. Proakis and M.~Salehi, \emph{Digital Communications 5th Edition}.\hskip
  1em plus 0.5em minus 0.4em\relax McGraw Hill, 2007.

\bibitem{Wang2003}
Z.~Wang and G.~B. {Giannakis}, ``A simple and general parameterization
  quantifying performance in fading channels,'' \emph{IEEE Transactions on
  Communications}, vol.~51, no.~8, pp. 1389--1398, Aug 2003.

\bibitem{HXu2002}
H.~Xu, V.~Kukshya, and T.~Rappaport, ``Spatial and temporal characteristics of
  60-{GHz} indoor channels,'' \emph{IEEE Journal on Selected Areas in
  Communications}, vol.~20, no.~3, pp. 620--630, 2002.

\bibitem{Standard}
{IEEE} 802.15 {WPAN} millimeter wave alternative {PHY} {Task Group 3c}.
  [online] \url{www.ieee802.org/15/pub/TG3c.html}.

\bibitem{Ayach2012}
O.~E. {Ayach}, R.~W. {Heath}, S.~{Abu-Surra}, S.~{Rajagopal}, and Z.~{Pi},
  ``The capacity optimality of beam steering in large millimeter wave {MIMO}
  systems,'' \emph{2012 IEEE 13th International Workshop on Signal Processing
  Advances in Wireless Communications (SPAWC)}, pp. 100--104, Jun. 2012.

\bibitem{Hogg1978}
R.~V. Hogg and A.~T. Craig, \emph{Introduction to Mathematical Statistics, 4th
  {Edition}}.\hskip 1em plus 0.5em minus 0.4em\relax New York: Macmillan, 1978.

\bibitem{Satterth1946}
F.~E. Satterthwaite, ``An approximate distribution of estimates of variance
  components,'' \emph{Biometrics Bulletin}, vol.~2, no.~6, pp. 110--114, 1946.

\bibitem{Massey}
F.~Massey, ``Sums of {Gamma} random variables,'' \emph{[Online]. Available:
  www-personal.umd.umich.edu/~fmassey/gammaRV}.

\bibitem{Bulmer1965}
M.~G. Bulmer, \emph{Principles of Statistics}.\hskip 1em plus 0.5em minus
  0.4em\relax Dover, 1965.

\bibitem{Sohrabi2016}
F.~Sohrabi and W.~Yu, ``Hybrid digital and analog beamforming design for
  large-scale antenna arrays,'' \emph{IEEE Journal of Selected Topics in Signal
  Processing}, pp. 3476--3480, 2013.

\end{thebibliography}
	\end{document}